\title{\LARGE \bf
A Scenario Approach to Risk-Aware Safety-Critical System Verification
}
\author{Prithvi Akella, Mohamadreza Ahmadi, and Aaron D. Ames$^{1}$
\thanks{*This work was supported by AFOSR}
\thanks{$^{1}$All authors are with the California Institute of Technology
        {\tt\small pakella@caltech.edu}}%
}
\begin{document}

\maketitle
\thispagestyle{empty}
\pagestyle{empty}

\begin{abstract}

With the growing interest in deploying robots in unstructured and uncertain environments, there has been increasing interest in factoring risk into safety-critical control development.  Similarly, the authors believe risk should also be accounted in the verification of these controllers.  In pursuit of sample-efficient methods for uncertain black-box verification then, we first detail a method to estimate the Value-at-Risk of arbitrary scalar random variables without requiring \textit{apriori} knowledge of its distribution.  Then, we reformulate the uncertain verification problem as a Value-at-Risk estimation problem making use of our prior results.  In doing so, we provide fundamental sampling requirements to bound with high confidence the volume of states and parameters for a black-box system that could potentially yield unsafe phenomena.  We also show that this procedure works independent of system complexity through simulated examples of the Robotarium.

\end{abstract}

\section{INTRODUCTION}
Safety is of critical importance to autonomous systems, and so it is natural to ask: do controllers for these systems guarantee safety in practice?  While there are multiple approaches to this verification question, our work aims to extend the optimization and sample-based approaches of~\cite{dreossi2019verifai,haesaert2017data,ghosh2018verifying,annpureddy2011s,donze2010breach}.  For most works in this vein, satisfactory robot behavior is categorized via positive evaluation of a robustness measure over the robot's state trajectory.  As a result, it is natural to phrase controller verification as the minimization of this measure over a set of interest~\cite{corso2021survey}.  Then, there exist multiple methods to solve such an optimization problem, \textit{e.g.} Bayesian Optimization~\cite{dreossi2019verifai}, Monte Carlo Sampling~\cite{annpureddy2011s}, \textit{etc}.

However, with the growing interest in deploying robots in unstructured and uncertain environments~\cite{ahmadi2020risk,mcghan2016resilient,dixit2020risk,prob_robotics}, we should likewise consider uncertainty in the robot's evolution as part of the verification procedure.  While stochastic verification has received interest in the recent past,\cite{steinhardt2012finite,santoyo2021barrier,wisniewski2017stochastic,prajna2007framework}, these works phrase the identification of a stochastic barrier certificate or probabilistic verification statement as the outcome of an optimization problem over a multi-dimensional set of interest \textit{e.g.} a coefficient space for SOS-optimization methods for barrier construction, or a parameter space for direct verification methods.  As such, these methods are still prone to dimensional scaling issues or do not offer minimal sampling or iteration requirements for the generation of their probabilistic verification statement.

\newidea{Our Contribution:} We hope to provide a step towards sample-efficient risk-aware safety-critical verification methods in a black-box verification setting.  Specifically, we construct a method that provides a fundamental sampling requirement for probabilistic verification statement generation independent of the dimension of the verification problem.  We achieve this for a general class of nonlinear systems whose dynamics are corrupted by unstructured noise.  To achieve this, we first use a scenario approach to construct a method that is guaranteed to find upper bounds to the Value-at-Risk for arbitrary scalar random variables.  Then, we re-frame the verification problem as a Value-at-Risk determination problem for a scalar random variable.  Our verification result then stems via application of our prior, Value-at-Risk estimation results.  We also prove that through our procedure, we can bound with high confidence the volume of potentially unsafe system states and parameters.

\begin{figure}[t]
    \centering
    \includegraphics[width = 0.48\textwidth]{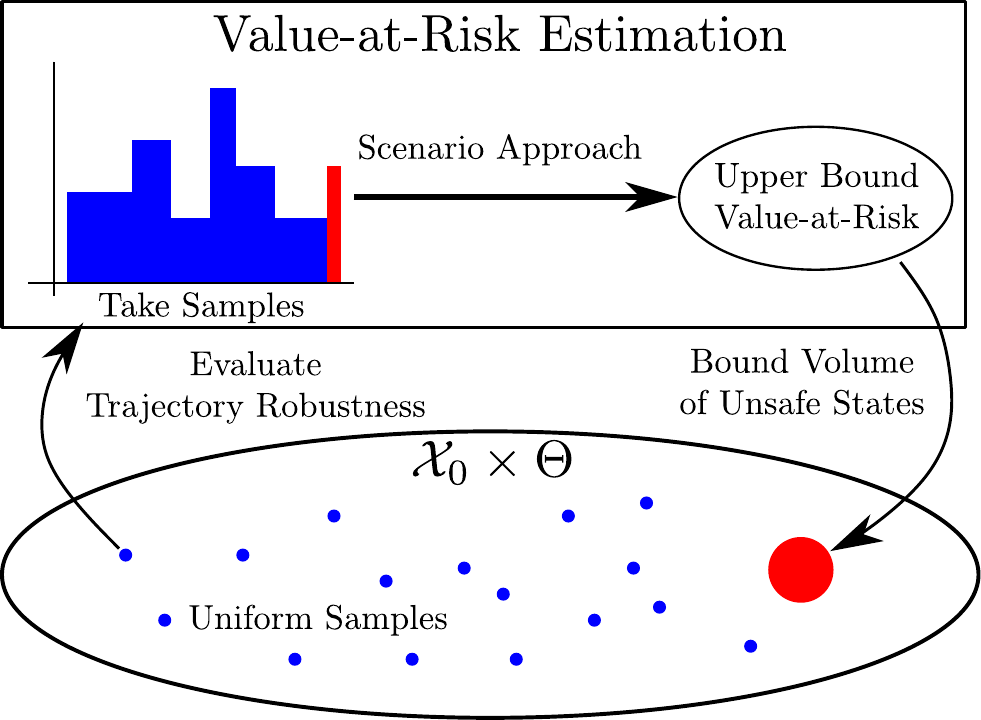} \vspace{-0.2 in}
    \caption{A flowchart of our procedure.  First, we propose a method to estimate the Value-at-Risk of a scalar-valued random variable without knowledge of its distribution.  Then, we reformulate the verification problem as a Value-at-Risk determination problem by randomly sampling system trajectories and evaluating their robustness.  In doing so, we can bound the volume of potentially problematic states and parameters with high confidence.} \vspace{-0.2 in}
    \label{fig:title}
\end{figure}

\newidea{Structural Overview:}  We will split our paper into two parts.  In the first part, Section~\ref{sec:scenario2risk}, we outline our scenario approach to Value-at-Risk estimation of scalar-valued random variables.  To facilitate doing so, Section~\ref{sec:prelims} provides some background information on both topics and formally states the estimation problem that we prove we can solve in Section~\ref{sec:high_conf_est}.  Then, Section~\ref{sec:numeric_ex} provides numerical examples of this estimation approach.  In the second part, Section~\ref{sec:risky_verification}, we recast the verification problem as a Value-at-Risk estimation problem.  Then in Section~\ref{sec:verification_properties}, we provide sample complexity results on utilizing our prior approach to solve this version of the verification problem.  Finally, in Section~\ref{sec:examples}, we show that our procedure works in simulation independent of system complexity.

\section{A Scenario Approach to Value-at-Risk}
\label{sec:scenario2risk}
The first part of our paper will detail our method to determine the Value-at-Risk (VaR) for scalar valued random variables with arbitrary distributions.  Specifically, consider a probability space $(\Omega, \mathcal{F}, P)$ with $\Omega$ the sample space, $\mathcal{F} = 2^{\Omega}$ the event space, and $P$ the probability measure.  We will devise a method to upper bound the VaR of the Random Variable (R.V.) $X: \Omega \to \mathbb{R}$ with an unknown distribution function $\pi:\mathbb{R} \to [0,1]$, \textit{i.e.} with samples $x$ of the R.V. $X$,
\begin{equation}
    \label{eq:distribution_function}
    P[X \in A \in \mathcal{F}] = \prob_{\pi}[x \in A \subset \mathbb{R}] = \int_A~\pi(s)~ds.
\end{equation}
To preface this method, we will provide a brief definition of Value-at-Risk and our scenario approach in the following section.  Here, "scenario approach" references the work done in~\cite{campi2008exact} and not the scenario approaches commonly utilized for VaR determination in the financial literature~\cite{andersson2001credit,larsen2002algorithms}.

\subsection{Preliminaries and Problem Formulation}
\label{sec:prelims}
\newidea{Value-at-Risk:} Value-at-Risk (VaR) is a statistic used quite frequently in the financial literature to determine an individual's exposure to risk~\cite{linsmeier2000value,rockafellar2000optimization}.  Both VaR and other risk measures have also been used quite frequently in the controls literature to determine safe actions in a worst-case sense~\cite{ dixit2020risk, majumdar2020should, ahmadi2022risk}.  Succinctly though, VaR is defined as follows.
\begin{definition}
\label{def:var}
Let $X$ be a scalar valued random variable with distribution function $\pi$ as per equation~\eqref{eq:distribution_function}.  The \emph{Value-at-Risk} level $\epsilon \in [0,1]$ defined $\var_{\epsilon}(X)$ is the infimum over $\zeta \in \mathbb{R}$ of $\zeta$ such that samples $x$ of $X$ lie below $\zeta$ with probability greater than or equal to $1-\epsilon$, \textit{i.e.}
\begin{equation}
    \var_{\epsilon}(X) = \inf \{\zeta~|~\prob_{\pi}[x \leq \zeta] \geq 1-\epsilon]\}.
\end{equation}
\end{definition}
\noindent Though there are other, more common risk measures in the controls literature - notably, Conditional-Value-at-Risk and Entropic-Value-at-Risk - VaR will still be useful for verification purposes as will be mentioned in a later section.

\newidea{Scenario Optimization:}  The brief description of scenario optimization in this section will stem primarily from \cite{campi2008exact}.   Scenario optimization identifies robust solutions to uncertain convex optimization problems of the following form:
\begin{equation}
    \label{eq:uncertain_program}
    \tag{UP}
    \begin{aligned}
        z^* & = \argmin_{z \in \mathbb{Z} \subset \mathbb{R}^d}~ & &c^Tz, \\
        &~~\mathrm{subject~to}~ & &z \in \mathbb{Z}_{\delta},~\delta \in \Delta.
    \end{aligned}
\end{equation}
Here, \eqref{eq:uncertain_program} is an uncertain program as $\delta \in \Delta$ is a random variable with distribution $\pi_\delta$.  Convexity is assured via assumed convexity in $\mathbb{Z}$ and $\mathbb{Z}_{\delta}$, and $\Delta$ is typically a set of infinite cardinality.  Hence, direct identification of a robust solution $z^*$ such that $z^* \in \mathbb{Z}_{\delta},~\forall~\delta \in \Delta$ is usually infeasible.

\begin{figure}[t]
    \centering
    \includegraphics[width = 0.47 \textwidth]{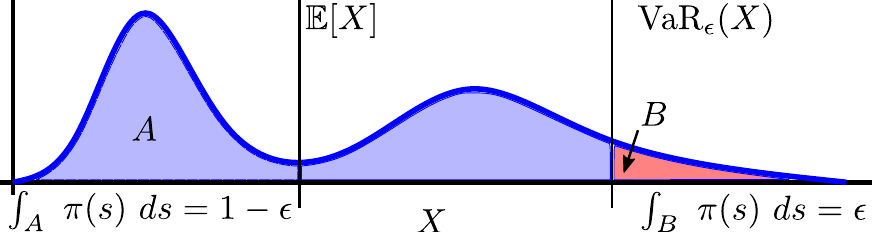}
    \caption{An example of Value-at-Risk.  For a scalar random variable $X$ with distribution $\pi$, $\var_{\epsilon}(X)$ is the point on the real line for which at least $1-\epsilon$ of the probability mass of $X$ lies to its left.} \vspace{-0.2 in}
    \label{fig:Var_example}
\end{figure}

To resolve this issue, scenario optimization solves a related optimization problem formed from an $N$-sized sample $\deltaset$ of the constraints $\delta$ and provides a probabilistic guarantee on the robustness of the corresponding solution $z^*_N$.  Specifically, given the sample scenario set $\deltaset$, we construct the following scenario program and assume that it has a solution for any $N$-sample set $\deltaset$~\cite{campi2008exact}:
\begin{equation}
    \label{eq:scenario_program}   
    \tag{RP-N}
    \begin{aligned}
        z^*_N & = \argmin_{z \in \mathbb{Z} \subset \mathbb{R}^d}~ & &c^Tz, \\
        &~~\mathrm{subject~to}~ & &z \in \mathbb{Z}_{\delta_i},~\forall~\delta_i \in \{\delta_i\}_{i=1}^N.
    \end{aligned}
\end{equation}
\begin{assumption}
\label{assump:RPN_solvability}
The scenario program~\eqref{eq:scenario_program} is solvable for any $N$-sample set $\deltaset$ and has a unique solution $z^*_N$.
\end{assumption}

However, as $z^*_N$ is the solution to~\eqref{eq:scenario_program}, there must exist a probability of sampling an uncertain parameter $\delta$ such that $z^*_N$ is not in the corresponding constraint set $\mathbb{Z}_{\delta}$.  Called the \emph{violation probability}, its definition is as follows.
\begin{definition}
\label{def:violation}
The \textit{violation probability} $V(z)$ of a given $z \in \mathbb{Z}$ is defined as the probability of sampling a constraint $\delta$ to which $z$ is not robust, \textit{i.e.} $V(z) = \prob_{\pi_\delta}[\delta~|~z \not \in \mathbb{Z}_{\delta}]$ .
\end{definition}

\noindent Then, the main result in scenario optimization upper-bounds the violation probability with high confidence.
\begin{theorem}[Adapted from Theorem 1 in~\cite{campi2008exact}]
\label{thm:scenario_opt}
Let~\eqref{eq:scenario_program} be the scenario program for~\eqref{eq:uncertain_program} formed from the $N$-sample set $\{\delta_i\}_{i=1}^N$ of the uncertain parameter $\delta$ with distribution $\pi_{\delta}$, and let $z^*_N \in \mathbb{R}^d$ be the solution to this scenario program~\eqref{eq:scenario_program}.  If Assumption~\ref{assump:RPN_solvability} holds, then $\forall~\epsilon\in[0,1]$,
\begin{equation}
    \prob^N_{\pi_\delta}[V(z^*_N) > \epsilon] \leq \sum_{i=0}^{d-1} \binom{N}{i} \epsilon^i(1-\epsilon)^{N-i}.
\end{equation}
\end{theorem}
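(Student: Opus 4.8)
The plan is to follow the classical support-constraint argument that underlies the exact feasibility theory of scenario optimization. The central structural fact I would establish first is that, for a convex program of the form~\eqref{eq:scenario_program}, the optimizer $z^*_N$ is pinned down by at most $d$ of the $N$ sampled constraints --- the so-called \emph{support constraints}, those whose removal strictly changes the optimizer. This bound is a consequence of a Helly-type argument applied to the convex constraint sets $\mathbb{Z}_{\delta_i}$: if more than $d$ distinct constraints each individually altered $z^*_N$, one could, using convexity of $\mathbb{Z}$ and of each $\mathbb{Z}_{\delta_i}$, exhibit a feasible point strictly improving the linear cost $c^Tz$, contradicting optimality together with the uniqueness guaranteed by Assumption~\ref{assump:RPN_solvability}.

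With that lemma in hand, I would reduce to the \emph{fully supported} case, in which exactly $d$ constraints are support constraints with probability one; a degeneracy argument shows this case maximizes $\prob^N_{\pi_\delta}[V(z^*_N) > \epsilon]$, so it suffices to prove the bound there, with the general statement inheriting the inequality $\leq$. Since the $N$ samples are i.i.d.\ and therefore exchangeable, I would then decompose the event $\{V(z^*_N) > \epsilon\}$ over the $\binom{N}{d}$ choices of which $d$-subset $I \subset \{1,\dots,N\}$ constitutes the support set. Writing $z^*_I$ for the optimizer of the program restricted to the constraints indexed by $I$, the subset $I$ is the support set precisely when $z^*_I = z^*_N$, which occurs exactly when $z^*_I$ satisfies all of the remaining $N-|I|$ constraints.

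The key computation then fixes one representative subset, say $I_0 = \{1,\dots,d\}$, and observes that, conditional on the first $d$ samples producing an optimizer $z^*_{I_0}$ with violation probability $v = V(z^*_{I_0})$, each of the $N-d$ remaining i.i.d.\ samples is satisfied independently with probability $1-v$, contributing a factor $(1-v)^{N-d}$. Exchangeability multiplies this representative contribution by $\binom{N}{d}$, so that
\begin{equation}
    \prob^N_{\pi_\delta}[V(z^*_N) > \epsilon] = \binom{N}{d} \int_\epsilon^1 (1-v)^{N-d}\, dF_d(v),
\end{equation}
where $F_d$ is the distribution of the violation probability of the $d$-sample base solution. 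The final step identifies $F_d$: in the fully supported case $V(z^*_N) > 0$ almost surely, so the right-hand side must equal one at $\epsilon = 0$ for every $N \geq d$; this family of moment conditions (a Hausdorff moment problem in $1-v$) forces the Beta density $dF_d(v) = d\,v^{d-1}\,dv$. Substituting and invoking the incomplete-Beta identity $d\binom{N}{d}\int_\epsilon^1 v^{d-1}(1-v)^{N-d}dv = \sum_{i=0}^{d-1}\binom{N}{i}\epsilon^i(1-\epsilon)^{N-i}$ delivers the claimed bound.

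The main obstacle I anticipate is the support-constraint lemma together with the reduction to the fully supported case: the Helly counting argument demands care about convexity, uniqueness, and the degenerate configurations in which fewer than $d$ constraints support the solution, and it is exactly the handling of these degeneracies that converts the equality of the fully supported computation into the inequality of the general statement. By comparison, the exchangeability decomposition and the Beta-integral evaluation are essentially bookkeeping once the structural lemma and the fully-supported normalization are secured.
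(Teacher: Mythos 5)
Your sketch is a faithful reconstruction of the standard Campi--Garatti argument (support constraints bounded by $d$ via a Helly/Radon-type argument, reduction to the fully supported case, exchangeability decomposition over $\binom{N}{d}$ candidate support sets, identification of the Beta law $dF_d(v) = d\,v^{d-1}\,dv$ through the moment conditions, and the incomplete-Beta identity), and it is essentially correct. Note, however, that the paper offers no proof of this statement at all --- it is imported verbatim as Theorem~1 of~\cite{campi2008exact} --- so the comparison is really between your sketch and the cited reference, with which it agrees in every essential step.
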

\noindent Effectively, Theorem~\ref{thm:scenario_opt} bounds the probability that our scenario solution's violation probability $V(z^*_N)$ exceeds a cutoff $\epsilon \in [0,1]$ based on the number of samples taken $N$, the dimension $d$ of our solution $z^*_N$, and the cutoff value $\epsilon$.  This concludes our brief overview of important topics - we will now formally mention the first problem of interest.

\newidea{Formal Problem Statement:}  We wish to construct a method to determine an upper bound to the VaR of a scalar-valued R.V. $X$ whose distribution $\pi$ is unknown.  In other words:
\begin{problem}
For the R.V. $X$ with unknown distribution function $\pi$ as per~\eqref{eq:distribution_function}, devise a method to determine an upper bound $\zeta \in \mathbb{R}$ to the Value-at-Risk level $\epsilon$ for $X$ with high probability, \textit{i.e.} for some $\beta,\epsilon \in [0,1]$, find $\zeta \in \mathbb{R}$ such that,
\begin{equation}
    \prob_{\pi}[\zeta \geq \var_{\epsilon}(X)] \geq \beta.
\end{equation}
\end{problem}

\subsection{High Confidence Estimation of Value-at-Risk}
\label{sec:high_conf_est}
To determine an upper bound to the Value-at-Risk of a scalar R.V. $X$, we will reformulate this upper bound identification as an uncertain convex optimization problem with respect to samples $x$ of the R.V. $X$:
\begin{equation}
    \label{eq:upper_bound_uncertain}
    \tag{UB-UP}
    \begin{aligned}
        \zeta^* & = \argmin_{\zeta \in \mathbb{R}}~ & &\zeta, \\
        &~~\mathrm{subject~to}~ & & \zeta \geq x,~\forall~x \in \mathbb{R},~x \sim \pi.
    \end{aligned}
\end{equation}
For~\eqref{eq:upper_bound_uncertain} then, if we take an $N$-sample set of the constraints $\{x_i\}_{i=1}^N$ we can construct a scenario program.
\begin{equation}
    \label{eq:upper_bound_scenario}
    \tag{UB-RP-N}
    \begin{aligned}
        \zeta^*_N & = \argmin_{\zeta \in \mathbb{R}}~ & &\zeta, \\
        &~~\mathrm{subject~to}~ & & \zeta \geq x_i,~\forall~x_i \in \{x_i\}_{i=1}^N.
    \end{aligned}
\end{equation}

Then, our results are twofold.  First, in Lemma~\ref{lem:solvability}, we guarantee that the scenario program~\eqref{eq:upper_bound_scenario} is solvable for any $N$-sample set $\{x_i\}_{i=1}^N$ and that the solution $\zeta^*_N$ is unique.  Then, we use Lemma~\ref{lem:solvability} to prove that a solution to our scenario program~\eqref{eq:upper_bound_scenario} is an upper bound to $\var_{\epsilon}(X)$ for any $\epsilon \in [0,1]$ with high probability.
\begin{lemma}
\label{lem:solvability}
The scenario program~\eqref{eq:scenario_program} has a unique solution $\zeta^*_N$ for any $N$-sample set $\{x_i\}_{i=1}^N$.
\end{lemma}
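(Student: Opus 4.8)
The plan is to characterize the feasible set of~\eqref{eq:upper_bound_scenario} explicitly and then observe that minimizing the scalar objective over it yields a single point. First I would note that the $N$ constraints $\zeta \geq x_i$ are simultaneously satisfied exactly when $\zeta$ dominates every sample, so the feasible set is the intersection of the half-lines $[x_i,\infty)$, which is precisely $[\max_{i} x_i, \infty)$. Because $\{x_i\}_{i=1}^N$ is a finite collection of real numbers, the maximum exists and is finite, so this feasible set is a nonempty, closed half-line; in particular~\eqref{eq:upper_bound_scenario} is feasible for any $N$-sample set.

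Next I would argue that the optimal value is attained at the left endpoint of this half-line. The objective is simply $\zeta$, which is continuous and bounded below on the closed set $[\max_i x_i, \infty)$, with infimum equal to the endpoint $\max_i x_i$. Since the feasible set is closed and contains its infimum, this value is attained, giving the candidate solution $\zeta^*_N = \max_i x_i$.

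Finally, for uniqueness I would invoke strict monotonicity of the objective in the single decision variable: any feasible point $\zeta \neq \max_i x_i$ must satisfy $\zeta > \max_i x_i$ and hence yields a strictly larger objective value, so no competing minimizer exists. This establishes both existence and uniqueness of $\zeta^*_N$.

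The statement is essentially immediate once the feasible set is identified, so I do not anticipate a genuine obstacle. The only point requiring care is confirming that the maximum over the finite sample set is well-defined, since this is what guarantees both feasibility and that the minimum is \emph{attained} rather than merely an infimum; finiteness of $N$ makes this automatic.
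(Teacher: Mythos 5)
Your proof is correct and follows essentially the same route as the paper's: the paper simply observes that~\eqref{eq:upper_bound_scenario} is a linear program in a single scalar variable subject to finitely many lower bounds, which is exactly the structure you spell out by identifying the feasible set as $[\max_i x_i,\infty)$ and the unique minimizer as $\zeta^*_N = \max_i x_i$. Your version is just a more explicit rendering of the same one-line argument.
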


\begin{proof}
For any $N$-sample set $\{x_i\}_{i=1}^N$, the scenario program~\eqref{eq:scenario_program} is a linear program minimizing a scalar devision variable subject to a finite number of lower bounds $b_i \in \mathbb{R}$.  This guarantees a unique solution.
\end{proof}

\begin{theorem}
\label{thm:bounding_var}
Let $\zeta^*_N$ be the solution to~\eqref{eq:upper_bound_scenario} for a set of $N$ samples $\{x_i\}_{i=1}^N$ of a R.V. $X$ with unknown distribution function $\pi$ as per~\eqref{eq:distribution_function}.  The following statement is true $\forall~\epsilon \in[0,1]$ and with $\var_{\epsilon}(X)$ as defined in Definition~\ref{def:var}:
\begin{equation}
    \prob^N_{\pi}[\zeta^*_N \geq \var_{\epsilon}(x)] \geq 1-(1-\epsilon)^N.
\end{equation}
\end{theorem}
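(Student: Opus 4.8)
The plan is to reduce the probabilistic statement to a per-sample quantile bound via independence of the $N$ draws. First I would pin down the scenario solution in closed form: since~\eqref{eq:upper_bound_scenario} minimizes $\zeta$ subject to $\zeta \geq x_i$ for each sampled constraint, and Lemma~\ref{lem:solvability} guarantees this minimum is attained uniquely, the solution is simply the sample maximum $\zeta^*_N = \max_{1 \leq i \leq N} x_i$.

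Next I would pass to the complementary event. The desired inequality $\zeta^*_N \geq \var_{\epsilon}(X)$ fails precisely when $\max_i x_i < \var_{\epsilon}(X)$, i.e. when every one of the $N$ samples lies strictly below $\var_{\epsilon}(X)$. Because the samples are drawn i.i.d. from $\pi$, this factorizes:
\begin{equation}
    \prob^N_{\pi}[\zeta^*_N < \var_{\epsilon}(X)] = \big(\prob_{\pi}[x < \var_{\epsilon}(X)]\big)^N.
\end{equation}
Taking complements, it then suffices to show that the per-sample probability $\prob_{\pi}[x < \var_{\epsilon}(X)]$ is at most $1-\epsilon$.

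Establishing this per-sample bound is the main obstacle, and it is where the infimum in Definition~\ref{def:var} must be handled with care. Writing $q = \var_{\epsilon}(X)$ and $F(\zeta) = \prob_{\pi}[x \leq \zeta]$, the definition states that $q$ is the infimum of those $\zeta$ for which $F(\zeta) \geq 1-\epsilon$; hence every $\zeta < q$ satisfies $F(\zeta) < 1-\epsilon$. Since $F$ is nondecreasing, the left limit obeys $\prob_{\pi}[x < q] = \sup_{\zeta < q} F(\zeta) \leq 1-\epsilon$. The subtlety is strict-versus-nonstrict: one cannot simply evaluate $F(q)$, because $\pi$ may place an atom at $q$ so that $F$ jumps there, and taking the left limit rather than $F(q)$ is precisely what keeps the bound at $1-\epsilon$. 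Substituting this estimate into the factorized expression yields $\prob^N_{\pi}[\zeta^*_N < \var_{\epsilon}(X)] \leq (1-\epsilon)^N$, and the claimed inequality follows by taking complements.
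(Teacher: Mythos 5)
Your proof is correct, but it takes a genuinely different and more elementary route than the paper. The paper never computes $\zeta^*_N$ explicitly: it invokes Lemma~\ref{lem:solvability} to verify Assumption~\ref{assump:RPN_solvability}, applies the general scenario-optimization bound of Theorem~\ref{thm:scenario_opt} with $d=1$ to get $\prob^N_{\pi}[V(\zeta^*_N) > \epsilon] \leq (1-\epsilon)^N$, and then unwinds the definitions of violation probability and Value-at-Risk to arrive at the claim. You instead observe that the scenario LP has the closed-form solution $\zeta^*_N = \max_i x_i$, reduce the complementary event to all $N$ i.i.d.\ samples falling strictly below the quantile, factorize, and bound the per-sample probability $\prob_{\pi}[x < \var_\epsilon(X)] \leq 1-\epsilon$ via the infimum in Definition~\ref{def:var} and the left limit of the CDF. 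Your handling of a possible atom at the quantile is careful and correct (and is, if anything, more general than the paper's setting, which posits a density in~\eqref{eq:distribution_function}). What your argument buys is self-containment and transparency: it makes plain that the $(1-\epsilon)^N$ bound is exactly the probability that $N$ independent draws all miss the upper $\epsilon$-tail, without importing the Campi--Garatti machinery. What the paper's route buys is uniformity with its broader framing: by passing through the general scenario theorem, the one-dimensional VaR bound is exhibited as a special case of a result that extends to higher-dimensional decision variables, which is the framework the rest of the paper builds on. Both proofs are valid; yours could be seen as re-deriving the $d=1$ instance of the scenario theorem from first principles.
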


\begin{table}[t]
    \centering
    \caption{$\min\limits_{i=1,2,\dots,30}~\zeta^{i*}_N - \var_{\epsilon}(X)$ for the distributions listed.}
    \label{tab:scenario_numerics}
    \begin{tabular}{|c|c|c|c|}
        \hline
        $\epsilon$ & $\uniform[-1,1]$ & $\normal(0,1)$ & $\chi(2)$ \\ 
        \hline
         0.01 & $\approx 0.019$ & $\approx 0.778$ & $\approx 4.56$\\
        \hline
         0.007 & $\approx 0.012$ & $\approx 0.596$ & $\approx 4.96$ \\
        \hline
         0.003 & $\approx 0.007$ & $\approx 0.509$ & $\approx 3.25$ \\
        \hline
         0.001 & $\approx 3 \times 10^{-4}$ & $\approx 0.066$ & $\approx 0.458$ \\
        \hline
    \end{tabular} \vspace{-0.2 in}
\end{table}

\begin{proof}
Lemma~\ref{lem:solvability} proves that our scenario program~\eqref{eq:upper_bound_scenario} satisfies Assumption~\ref{assump:RPN_solvability}.  Per Theorem~\ref{thm:scenario_opt} then, we have the following inequality as $d=1$ since $\zeta \in \mathbb{R}$:
\begin{equation}
    \prob^N_{\pi}[V(\zeta^*_N) > \epsilon] \leq (1-\epsilon)^N.
\end{equation}
Then by definition of the violation probability in Definition~\ref{def:violation},
\begin{equation}
    \prob^N_{\pi}[\prob_\pi[x~|~x \leq \zeta^*_N] \geq 1-\epsilon] \geq 1- (1-\epsilon)^N.
\end{equation}
Then, by the definition of Value-at-Risk in Definition~\ref{def:var}, we can restate the argument for the outer probability as follows:
\begin{equation}
    \prob^N_{\pi}[y^*_N \geq \var_{\epsilon}(X)] \geq 1- (1-\epsilon)^N.
\end{equation}
\end{proof}

\begin{figure}[t]
    \centering
    \includegraphics[width = 0.48\textwidth]{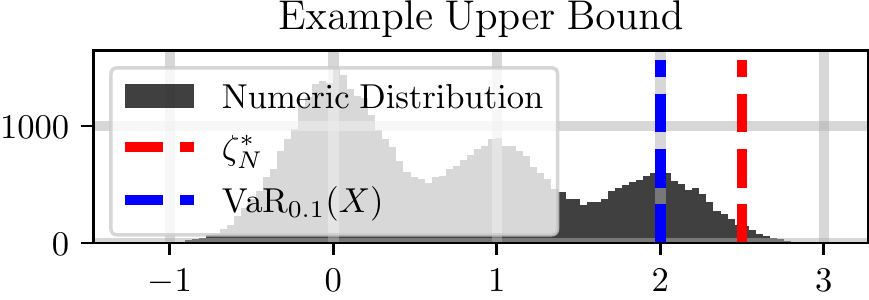} \vspace{-0.2 in}
    \caption{An example of our scenario approach to determining an upper bound to the Value-at-Risk of the R.V. whose distribution is shown.  In this case, we seek an upper bound $\zeta^*_N$ such that $\zeta^*_N \geq \var_{0.1}(X)$.} \vspace{-0.2 in}
    \label{fig:ex_upperbnd}
\end{figure}

For verification purposes then, we will randomly sample system trajectories and evaluate the robustness of those trajectories through a robustness measure $\rho$.  The positivity of this measure indicates satisfactory behavior.  Then, our goal will be to identify a lower bound $-\zeta^*_N$ such that a large fraction of the probability mass of this random robustness $R$ - $1-\epsilon$ of the probability mass - lies to the right of the calculated lower bound $-\zeta^*_N$.  This is a Value-at-Risk determination problem.  Before phrasing the verification problem in this fashion and utilizing these results, however, we will provide some examples of our scenario approach to Value-at-Risk determination and Theorem~\ref{thm:bounding_var} in the next section.

\subsection{Numerical Examples}
\label{sec:numeric_ex}
Theorem~\ref{thm:bounding_var} states that our scenario approach will identify an upper bound $\zeta^*_N$ to $\var_{\epsilon}(X)$ for any $\epsilon \in [0,1]$ and for any scalar-valued random variable $X$ whose distribution function $\pi$ is unknown.  Table~\ref{tab:scenario_numerics} shows the results of applying our procedure to identify upper bounds to the VaR of three random variables $X$ whose distributions are listed in the table.  Specifically, for each of the $30$ trials performed, we took $N=5000$ samples from the distribution listed to calculate the upper bound $\zeta^{*}_N$ to the Value-at-Risk for the R.V. X with the distribution shown.  The risk levels $\epsilon$ are shown in the table as well.  We estimated the true $\var_{\epsilon}(X)$ by taking $50000$ samples of the same random variable and reporting the value for which $1-\epsilon$ of the sampled probability mass lay to its left.  In the table, we report the difference between the minimum upper bound $\zeta^{i*}_N$ generated over $30$ trials and this true Value-at-Risk estimate.  Per Theorem~\ref{thm:bounding_var}, we expect the reported upper bound $\zeta^*_N$ to be greater than or equal to $\var_{\epsilon}(X)$ with minimum probability $0.993$ over all trials.  This is corroborated by the fact that for all $360$ upper bounding trials performed, the upper bound $\zeta^*_N \geq \var_{\epsilon}(X)$ for the corresponding risk-level $\epsilon$, as all reported differences in Table~\ref{tab:scenario_numerics} are positive.  Figure~\ref{fig:ex_upperbnd} portrays the results of one trial of our approach in determining an upper bound to the Value-at-Risk of a multi-modal random variable whose distribution is also shown.  As can be seen, even with $N=50$ samples, we can determine a high probability upper bound to the Value-at-Risk of this R.V. as well.

\section{Applications to Verification}
\label{sec:app2ver}
An interesting application of Theorem~\ref{thm:bounding_var} arises in safety-critical system verification where system evolution is partially stochastic due to un-modeled dynamics, noise, \textit{etc}.  Study in this vein is meaningful insofar as true system trajectories exhibit a nontrivial amount of stochasticity which both the authors and others in the controls community believe should be accounted for during both control development and verification~\cite{aastrom2012introduction,majumdar2020should,ahmadi2020risk,ahmadi2021constrained,santoyo2021barrier}.  To formalize study in this vein then, we will state both the specific problem and how our prior results are directly applicable in this scenario.
\subsection{Recasting Verification as a Value-at-Risk Problem}
\label{sec:risky_verification}
We will consider the general class of systems modelable by a nonlinear control system with $\mathcal{X}$ the state space, $\mathcal{U}$ the input space, and $\Theta$ a known space of parameters $\theta$ influencing the system's controller $U$.  Furthermore, we assume the system is subject to stochastic noise $\xi$ distributed by the unknown distribution $\pi_{\xi}(x,u,t)$ over $\mathbb{R}^n$, which lets us account for any deterministic or stochastic uncertainty that varies by the system state, time, or input.
\begin{align}
    \dot x & = f(x,u) + \xi, & & x \in \mathcal{X} \subset \mathbb{R}^n,~u \in \mathcal{U} \subset \mathbb{R}^m,~\label{eq:evolution}\\
    u & = U(x,\theta), & & \theta \in \Theta \subset \mathbb{R}^p, \\
    \xi & \sim \pi_{\xi}(x,u,t), & & \int_{\mathcal{X}}\pi_{\xi}(x,u,t,s)~ds = 1.
\end{align}
We will define $x^\theta_t$ as our closed-loop system solution at time $t$ - note that the parameter $\theta$ does not change over a trajectory:
\begin{equation}
    \label{eq:CL_sys}
    \dot x^\theta_t = f\left(x^\theta_t, U\left(x^\theta_t, \theta\right)\right) + \xi.
\end{equation}
Finally, $x^\theta$ corresponds to our state signal, \textit{i.e.} $x^\theta \in \signalspace$.

Then, verification work typically assumes the existence of a robustness metric $\rho$ - a function that maps state trajectory signals to the real line, with positive evaluations of the metric indicating system objective satisfaction.
\begin{definition}
\label{def:robustness}
A \textit{robustness metric} $\rho$ is a function $\rho: \signalspace \to \mathbb{R}$ such that $\rho(s) \geq 0$ only for those signals $s$ that exhibit satisfactory behavior.
\end{definition}
\noindent Examples of robustness metrics $\rho$ include the minimum value of a control barrier function $h$ over some pre-specified time horizon~\cite{ames2016control, santoyo2021barrier, srinivasan2018control}, or the robustness metrics of Signal Temporal Logic~\cite{donze2010robust, deshmukh2017robust, lindemann2019control}.  As the existence and construction of these functions has been well-studied, we will simply assume their existence for the time being.

Then, our recasting of verification as a Value-at-Risk problem stems directly from the existence of this robustness measure $\rho$.  Specifically, if we uniformly randomly sample the system's initial condition $x_0 \in \mathcal{X}_0 \subseteq \mathcal{X}$ and control parameter $\theta \in \Theta$, then the robustness of the resulting closed-loop trajectory $\rho(x^\theta)$ is a sample of some scalar-valued random variable $R$ whose distribution $\pi_R$ is unknown.  This is the same random variable setting as we had for the Value-at-Risk determination problem.  As such, we can use our prior approach to determine an upper bound in this case on $\var_{\epsilon}(-R)$ which translates to a probabilistic lower bound for the minimum robustness achievable by trajectories of this system.  This is formalized through the following definitions and problem statement.

\begin{definition}
\label{def:traj_robustness}
$R(x_0,\theta)$ is a scalar-valued random variable with distribution $\pi_{R}(x_0,\theta)$ as per~\eqref{eq:distribution_function} corresponding to the closed-loop robustness $\rho(x^\theta)$ of trajectories $x^\theta$ emanating from the initial condition $(x_0,\theta) \in \mathcal{X}_0 \times \Theta \triangleq \Phi$.
\end{definition}

\begin{definition}
\label{def:randomized_robustness}
$R$ is a scalar-valued random variable with distribution $\pi_R$ as per~\eqref{eq:distribution_function} denoting the closed-loop robustness $\rho(x^\theta)$ of trajectories $x^\theta$ whose initial condition and parameter $(x_0,\theta)$ were sampled uniformly from their combined spaces, \textit{i.e.} $(x_0,\theta) \sim \uniform[\Phi]$ where $\Phi \triangleq \mathcal{X}_0 \times \Theta$.
\end{definition}



\begin{problem}
\label{prob:2}
For the R.V. $R$ with distribution $\pi_R$ as per Definition~\ref{def:randomized_robustness} devise a method to determine a lower bound $r^* \in \mathbb{R}$ such that samples $r$ of $R$ are greater than or equal to $r^*$ with minimum probability $1-\epsilon$ for some $\epsilon \in [0,1]$, \textit{i.e.}
\begin{equation}
    \label{eq:ineq_rstar}
    \prob_{\pi_R}[r~|~r \geq r^*] \geq 1-\epsilon.
\end{equation}
\end{problem}

As identification of such a lower bound $r^*$ is non-standard in existing verification literature, we will first mention the utility in the identification of such a probabilistic lower bound $r^*$.  Then, we will mention how we can use our prior results to determine such a lower bound and provide a minimum sample requirement for doing so.

\begin{figure}[t]
    \centering
    \hspace{-0.1 mm}\includegraphics[width = 0.48\textwidth]{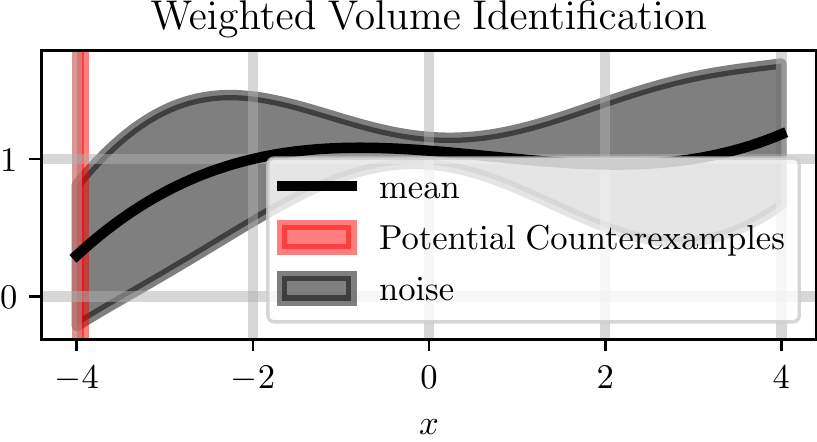}
    \caption{The above figure provides context for why we choose to identify a probabilistic lower bound $r^*$ satisfying the inequality in Problem~\ref{prob:2} for some $\epsilon \in [0,1]$.  Doing so let's us upper bound by $\epsilon$ the weighted volume of the states in the red region shown.  For verification purposes, this bounds the weighted volume of the initial condition and parameters $(x_0,\theta)$ that could yield trajectories whose robustness $r < r^*$ as stated in Corollary~\ref{cor:weight_volume}} \vspace{-0.2 in}
    \label{fig:motivator}
\end{figure}

\subsection{Properties of Randomized Verification}
\label{sec:verification_properties}

\newidea{Why Identify $\mathbf{r^*}$?} We propose to identify a probabilistic lower bound $r^*$ for the random variable $R$ as it will let us bound the weighted volume of the initial conditions and parameters $(x_0,\theta) \in \Phi$ which have the potential of yielding trajectories whose robustness $r < r^*$.  A motivating example of this is shown in Figure~\ref{fig:motivator}, and we can show this volume bounding as follows.  First, we define a function $B: \mathbb{R} \to \mathbb{R}$ outputting the total probability of sampling a trajectory whose robustness $r < y$ a scalar cutoff in $\mathbb{R}$.  In what follows, $\pi_R(x_0,\theta)$ is the distribution of the random variable $R(x_0,\theta)$ as per Definition~\ref{def:traj_robustness}:
\begin{align}
    B(y) & = \int_{\Phi}\frac{1}{\beta} \int_{-\infty}^{y}~\pi_{R}(x_0,\theta,s)~ds~d(x_0,\theta), \label{eq:risky_sum}\\
    \beta & = \int_{\Phi}~1~d(x_0,\theta).
\end{align}
Then, the formal statement of $r^*$ being a holistic characterization of system behavior will follow.
\begin{proposition}
\label{prop:risk_bounding}
Let $r^*$ satisfy~\eqref{eq:ineq_rstar} for some $\epsilon \in [0,1]$ and let $B$ be as per~\eqref{eq:risky_sum}.  If $r^* \geq 0$, then $B(0) \leq \epsilon$.
\end{proposition}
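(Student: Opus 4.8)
The plan is to recognize that the weighted double integral defining $B$ in~\eqref{eq:risky_sum} is precisely the cumulative distribution of the \emph{marginalized} random variable $R$ of Definition~\ref{def:randomized_robustness}, evaluated at the cutoff, so that $B(0) = \prob_{\pi_R}[r < 0]$. Once this identification is made, the proposition reduces to elementary monotonicity of probability combined with the hypothesis $r^* \geq 0$. Concretely, I would first rewrite the defining inequality~\eqref{eq:ineq_rstar} in complementary form as $\prob_{\pi_R}[r < r^*] \leq \epsilon$, and then bound $B(0)$ above by this quantity.

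The first and central step is to establish the identity $B(0) = \prob_{\pi_R}[r < 0]$. The factor $1/\beta$ in~\eqref{eq:risky_sum} is exactly the density of $\uniform[\Phi]$, since $\beta = \int_{\Phi} 1\, d(x_0,\theta)$ is the volume of $\Phi$; and by~\eqref{eq:distribution_function} the inner integral $\int_{-\infty}^{0} \pi_R(x_0,\theta,s)\, ds$ is the conditional probability $\prob_{\pi_R(x_0,\theta)}[r < 0]$ that a trajectory launched from the fixed pair $(x_0,\theta)$ in Definition~\ref{def:traj_robustness} has negative robustness. Integrating this conditional probability against the uniform density over $\Phi$ is, by the law of total probability (equivalently, a Fubini interchange, which is permitted because the integrand is nonnegative), the unconditional probability that a trajectory whose initial condition and parameter are drawn from $\uniform[\Phi]$ has negative robustness, which is exactly $\prob_{\pi_R}[r < 0]$. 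Hence $B(0) = \prob_{\pi_R}[r < 0]$. For the second step, since $r^* \geq 0$, every sample with $r < 0$ also satisfies $r < r^*$, giving the set inclusion $\{r < 0\} \subseteq \{r < r^*\}$ and therefore $\prob_{\pi_R}[r < 0] \leq \prob_{\pi_R}[r < r^*]$. Combining with the complementary form of~\eqref{eq:ineq_rstar} chains to $B(0) = \prob_{\pi_R}[r < 0] \leq \prob_{\pi_R}[r < r^*] \leq \epsilon$, as desired.

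The main obstacle, and really the only nontrivial step, is the first one: cleanly justifying that the weighted double integral collapses to the distribution function of the single marginalized variable $R$. This hinges on reading the $1/\beta$ weighting as the uniform density over $\Phi$ and on the Fubini interchange of the integration over $\Phi$ with the integration in $s$. A minor technical caveat worth flagging is the distinction between $<$ and $\leq$ at the cutoff: the inner integral $\int_{-\infty}^{0}$ encodes $\prob[r < 0]$, whereas~\eqref{eq:ineq_rstar} concerns $\prob[r \geq r^*]$; under the continuous-density model of~\eqref{eq:distribution_function} the boundary carries no mass, so the two conventions agree and the chain of inequalities passes through unchanged.
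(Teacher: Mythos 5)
Your proposal is correct and follows essentially the same route as the paper's own proof: the paper likewise rewrites the hypothesis~\eqref{eq:ineq_rstar} in complementary (integral) form to obtain $B(r^*) \leq \epsilon$, and then invokes monotonicity of $B$ in its cutoff together with $r^* \geq 0$ to conclude $B(0) \leq B(r^*) \leq \epsilon$. The only difference is cosmetic: you make the law-of-total-probability identification $B(y) = \prob_{\pi_R}[r < y]$ explicit and phrase monotonicity as a set inclusion, whereas the paper carries out the same steps directly on the double integrals.
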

\begin{proof}
As $r^*$ satisfies the inequality in Problem~\ref{prob:2} for some $\epsilon \in [0,1]$ we have the following integral inequality:
\begin{equation}
    \int_{\Omega}\frac{1}{\beta} \int_{r^*}^{\infty}~\pi_{R}(x_0,\theta,s)~ds~d(x_0,\theta) \geq 1-\epsilon.
\end{equation}
Furthermore, as $\pi_R$ is a valid probability distribution,
\begin{equation}
     \int_{\Omega}\frac{1}{\beta} \int_{r^*}^{\infty}~\pi_{R}(x_0,\theta,s)~ds~d(x_0,\theta) + B(r^*) = 1.
\end{equation}
As a result, $B(r^*) \leq \epsilon$.  Then, as $\pi_R(x_0,\theta)$ is a valid probability distribution, \textit{i.e.} $\pi_{R}(x_0,\theta,s) \in [0,1],~\forall~s \in \mathbb{R}$, we also know that if $a \leq b$, then $B(a) \leq B(b)$.  As a result, since $r^* \geq 0$ and $B(r^*) \leq \epsilon$, so to is $B(0) \leq \epsilon$.
\end{proof}

As a result, if we fix a minimum probability by choosing an $\epsilon \in [0,1]$ and find that the corresponding probabilistic robustness lower bound $r^* \geq 0$, then the total probability $B(0)$ of sampling a trajectory whose robustness $r < 0$ is bounded above by $\epsilon$.  As we uniformly sample initial conditions and parameters $(x_0,\theta)$, this total probability $B(0)$ also corresponds to the weighted volume of initial conditions and parameters $(x_0,\theta) \in \Phi$ which could yield trajectories $x^\theta$ whose robustness $\rho(x^\theta) < 0$.  The weights $w(x_0,\theta)$ for this weighted volume is the probability of that initial condition and parameter pair $(x_0,\theta)$ realizing such a trajectory, \textit{i.e.}
\begin{align}
    B(0) & = \int_{\Phi}~\frac{w(x_0,\theta)}{\beta}~d(x_0,\theta), \\
    w(x_0,\theta) & = \int_{-\infty}^0~\pi_{R}(x_0,\theta,s)~ds.
\end{align}

If all uncertainty $\xi$ in our system evolution~\eqref{eq:evolution} is deterministic, then $B(0)$ directly corresponds to the volume fraction of those initial condition and parameter pairs that yield trajectories whose robustness $\rho(x^\theta) < 0$.  To formalize this, we will define the set of the problematic initial condition and parameter pairs $F$ and a function $\mathcal{V}$ identifying the volume fraction of a given subset $A$ of $\Phi$:
\begin{align}
    \label{eq:failure_set}
    F(y) & = \left\{(x_0,\theta) \in \Phi~\big|~\prob_{\pi_R(x_0,\theta)}[r~|~r \leq y] \neq 0 \right\}, \\
    \mathcal{V}(A) & = \frac{\int_A~1~d(x_0,\theta)}{\int_{\Omega}~1~d(x_0,\theta)}. \label{eq:volume_fraction}
\end{align}
We will also require definition of the dirac-delta function $\delta$:
\begin{equation}
    \label{eq:dirac_delta}
    \delta:\mathbb{R} \to \mathbb{R} \suchthat \int_{a}^b \delta(x)~dx = \begin{cases}
    1 & \mbox{if~} x \in [a,b], \\
    0 & \mbox{else}.
    \end{cases}
\end{equation}
Then we have the following corollary bounding the volume fraction of the problematic states $\mathcal{V}(F(0))$:
\begin{corollary}
\label{cor:weight_volume}
Let $r^*$ satisfy~\eqref{eq:ineq_rstar}, $F$ be as in equation~\eqref{eq:failure_set}, $\mathcal{V}$ be as in equation~\eqref{eq:volume_fraction}, $\epsilon \in [0,1]$, and $\pi_{R}(x_0,\theta) = \delta(s)$ with $\delta$ as per~\eqref{eq:dirac_delta} for some $s \in \mathbb{R}$ which is (perhaps) different $\forall~(x_0,\theta) \in \Phi$.  If $r^* \geq 0$, then $\mathcal{V}(F(0)) \leq \epsilon$.
\end{corollary}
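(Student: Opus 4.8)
The plan is to reduce the statement to Proposition~\ref{prop:risk_bounding}, which already delivers $B(0) \leq \epsilon$ whenever $r^* \geq 0$. The only genuinely new content is specializing the fully-stochastic weighted volume $B(0)$ to the deterministic (dirac-delta) setting and showing that there $B(0)$ coincides exactly with the volume fraction $\mathcal{V}(F(0))$.

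First I would unpack the inner weight $w(x_0,\theta) = \int_{-\infty}^{0}\pi_{R}(x_0,\theta,s)~ds$ appearing in the definition of $B(0)$. Under the hypothesis $\pi_{R}(x_0,\theta) = \delta(\cdot - s)$ for a point $s = s(x_0,\theta)$ that may vary with the pair $(x_0,\theta)$, the convention~\eqref{eq:dirac_delta} forces this integral to equal $1$ when the deterministic robustness satisfies $s(x_0,\theta) \leq 0$ and $0$ otherwise. Thus $w(x_0,\theta)$ is precisely the indicator of the event that the realized robustness is at most zero.

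Next I would identify this event with the failure set $F(0)$. By definition~\eqref{eq:failure_set}, $(x_0,\theta) \in F(0)$ exactly when $\prob_{\pi_{R}(x_0,\theta)}[r \mid r \leq 0] \neq 0$; for a dirac-delta distribution this probability is either $1$ or $0$, so the condition $\neq 0$ is equivalent to $s(x_0,\theta) \leq 0$, which is exactly the support of $w$. Hence $w = \mathbf{1}_{F(0)}$ pointwise, and substituting into $B(0) = \int_{\Phi} w(x_0,\theta)/\beta~d(x_0,\theta)$ yields $B(0) = \left(\int_{F(0)} 1~d(x_0,\theta)\right)/\beta = \mathcal{V}(F(0))$, since $\beta = \int_{\Phi} 1~d(x_0,\theta)$ is precisely the normalizing volume in the denominator of $\mathcal{V}$.

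Finally, combining this identity with Proposition~\ref{prop:risk_bounding} gives $\mathcal{V}(F(0)) = B(0) \leq \epsilon$, as claimed. I expect the main subtlety to be the boundary case $s(x_0,\theta) = 0$: one must verify that the closed-endpoint convention in~\eqref{eq:dirac_delta} and the non-strict inequality $r \leq 0$ in~\eqref{eq:failure_set} are mutually consistent, so that a trajectory whose robustness is exactly zero is counted identically by $w$ and by the indicator of $F(0)$. No analytical estimates are needed beyond this bookkeeping; the entire argument is a change of representation of $B(0)$ followed by invocation of the already-established proposition.
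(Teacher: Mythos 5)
Your proposal is correct and follows essentially the same route as the paper: invoke Proposition~\ref{prop:risk_bounding} to get $B(0) \leq \epsilon$, then use the dirac-delta hypothesis to collapse the inner integral into the indicator of $F(0)$, identifying $B(0)$ with $\mathcal{V}(F(0))$. Your treatment is somewhat more explicit than the paper's about why the support of the weight $w$ coincides with $F(0)$ (including the boundary case $s = 0$), but the argument is the same.
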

\begin{proof}
First, via Proposition~\ref{prop:risk_bounding} we have via our assumptions that $B(0) \leq \epsilon$.  This results in the following inequality:
\begin{equation}
    \int_{\Omega}\frac{1}{\beta}\int_{-\infty}^{0}~\pi_{R}(x_0,\theta,s)~ds~d(x_0,\theta) \leq \epsilon.
\end{equation}
As we assume $\pi_{R}(x_0,\theta) = \delta(s)$ for some $s \in \mathbb{R}$, the above integral inequality changes to the following:
\begin{equation}
    \int_{\Omega}\frac{1}{\beta}\int_{-\infty}^{0}~\delta(s)~ds~d(x_0,\theta) \leq \epsilon.
\end{equation}
Then, the result stems by definition of the problematic set $F$ in~\eqref{eq:failure_set} and the dirac-delta function $\delta$ in~\eqref{eq:dirac_delta}.
\begin{equation}
    \frac{\int_{F(0)}~1~d(x_0,\theta)}{\beta} = \frac{\int_{F(0)}~1~d(x_0,\theta)}{\int_{\Omega}~1~d(x_0,\theta)} = \mathcal{V}(F(0))  \leq \epsilon.
\end{equation}
\end{proof}

\newidea{Simplicity in Finding $\mathbf{r^*}$:} Now that we have motivated why we might want to find such a probabilistic lower bound $r^*$, it remains to find such a lower bound.  Here we can leverage our prior results in the following corollary.
\begin{corollary}
\label{cor:verification_risk}
Let $\zeta^*_N$ be the solution to~\eqref{eq:upper_bound_scenario} for an $N$-sample set $\{x_i=-r_i\}_{i=1}^N$ of the random variable $R$ with distribution $\pi_R$ as per Definition~\ref{def:randomized_robustness}.  For all $\epsilon \in [0,1]$,
\begin{equation}
    \prob^N_{\pi_R}\left[\prob_{\pi_R}[r~|~r \geq -\zeta^*_N] \geq 1-\epsilon \right] \geq 1-(1-\epsilon)^N.
\end{equation}
\end{corollary}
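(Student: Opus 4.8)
The plan is to obtain this corollary as an essentially immediate consequence of Theorem~\ref{thm:bounding_var} applied to the auxiliary scalar random variable $X = -R$, followed by a sign-flip reinterpretation of the resulting Value-at-Risk guarantee. First I would observe that the $N$-sample set $\{x_i = -r_i\}_{i=1}^N$ is precisely a sample set of $X = -R$, whose distribution is unknown but well-defined as per~\eqref{eq:distribution_function}, and that $\zeta^*_N$ is by hypothesis the solution to~\eqref{eq:upper_bound_scenario} on exactly this set. Theorem~\ref{thm:bounding_var} therefore applies verbatim with its generic distribution $\pi$ taken to be that of $-R$, yielding for every $\epsilon \in [0,1]$ the bound
\begin{equation}
    \prob^N_{\pi_R}[\zeta^*_N \geq \var_\epsilon(-R)] \geq 1-(1-\epsilon)^N,
\end{equation}
where the outer product measure is unchanged whether we describe it through draws of $R$ or of $-R$, since the two are deterministically related.

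The second step is to rewrite the event $\{\zeta^*_N \geq \var_\epsilon(-R)\}$ in terms of $R$ by unfolding Definition~\ref{def:var}. Writing $x = -r$, the equivalence $-r \leq \zeta \iff r \geq -\zeta$ converts the defining super-level set $\{\zeta \mid \prob_{\pi_R}[-r \leq \zeta] \geq 1-\epsilon\}$ into $\{\zeta \mid \prob_{\pi_R}[r \geq -\zeta] \geq 1-\epsilon\}$. I would then show that whenever a sampled outcome satisfies $\zeta^*_N \geq \var_\epsilon(-R)$, the solution $\zeta^*_N$ itself lies in this set, so that $\prob_{\pi_R}[r \geq -\zeta^*_N] \geq 1-\epsilon$. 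Consequently the event appearing in the inner bracket of the corollary contains the event $\{\zeta^*_N \geq \var_\epsilon(-R)\}$, and monotonicity of probability transfers the $1-(1-\epsilon)^N$ bound to it, completing the argument.

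The main obstacle I anticipate is precisely this second step: justifying that $\zeta^*_N \geq \var_\epsilon(-R)$ \emph{forces} $\prob_{\pi_R}[r \geq -\zeta^*_N] \geq 1-\epsilon$ rather than merely approaching it, since $\var_\epsilon$ is defined as an infimum that need not a priori be attained. I would resolve this one genuinely measure-theoretic point by noting that the map $\zeta \mapsto \prob_{\pi_R}[-r \leq \zeta]$ is non-decreasing and right-continuous, so its super-level set is a closed half-line $[\var_\epsilon(-R), \infty)$; the infimum is therefore attained and the defining inequality persists for every value at or above it, in particular for $\zeta^*_N$. Everything else reduces to a direct appeal to results already established, so I expect the corollary to follow at once once this monotonicity observation is recorded.
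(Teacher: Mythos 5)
Your proposal is correct and follows essentially the same route as the paper: apply Theorem~\ref{thm:bounding_var} to $X=-R$ and translate the event via $-r\le\zeta \iff r\ge -\zeta$. Your extra observation that the super-level set $\{\zeta \mid \prob_{\pi_R}[-r\le\zeta]\ge 1-\epsilon\}$ is the closed half-line $[\var_\epsilon(-R),\infty)$ (by right-continuity of the distribution function) is a worthwhile addition, since it justifies the passage from $\zeta^*_N\ge\var_\epsilon(-R)$ back to the inner probability inequality --- a step the paper's own one-line proof glosses over.
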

\begin{proof}
By Theorem~\ref{thm:bounding_var} and replacing $\var_{\epsilon}(X)$ with its appropriate definition in this context we have the following:
\begin{equation}
    \prob^N_{\pi_{R}}\left[\prob_{\pi_{R}}\left[r~|~\zeta^*_N \geq -r \right] \right]\geq 1-(1-\epsilon)^N.
\end{equation}
The desired result stems from flipping the inequality.
\end{proof}

Corollary~\ref{cor:verification_risk} tells us that we can identify an estimate $-\zeta^*_N$ to our desired statistic $r^*$ for any confidence level $1-\epsilon$ if we take a sufficiently large number of samples $N$ of the random variable $R$.  However, it does not state how many samples $N$ are required to determine this estimate with high probability.  Theorem~\ref{thm:sample_req} formalizes this sample requirement.  Specifically, Theorem~\ref{thm:sample_req} states that the number of samples $N$ required to achieve high confidence $\gamma$ in our estimate $-\zeta^*_N$ is only a function of the desired risk level $\epsilon$ and the desired confidence $\gamma$.

\begin{theorem}
\label{thm:sample_req}
Let $\epsilon,\gamma \in [0,1]$, and let $\zeta^*_N$ be the solution to~\eqref{eq:upper_bound_scenario} for an $N$-sample set $\{x_i = -r_i\}_{i=1}^N$ of the R.V. $R$ with distribution $\pi_R$ as per Definition~\ref{def:randomized_robustness}.  If
\begin{equation}
    N \geq \frac{\log(1-\gamma)}{\log(1-\epsilon)},
\end{equation}
then,
\begin{equation}
    \label{eq:gamma_confidence}
    \prob^N_{\pi_{R}}\left[\prob_{\pi_{R}}[r~|~r \geq -\zeta^*_N] \geq 1-\epsilon \right] \geq \gamma.
\end{equation}
\end{theorem}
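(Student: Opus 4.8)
The plan is to bootstrap directly off Corollary~\ref{cor:verification_risk}, which already delivers the exact inner event appearing in~\eqref{eq:gamma_confidence}, only with confidence $1-(1-\epsilon)^N$ in place of the target $\gamma$. Because the inner statement $\prob_{\pi_R}[r~|~r \geq -\zeta^*_N] \geq 1-\epsilon$ is literally the same object in the corollary and in the theorem, the whole claim collapses to a one-line algebraic comparison: it suffices to show that the hypothesized lower bound on $N$ forces $1-(1-\epsilon)^N \geq \gamma$. Once that is in hand, monotonicity of the ordering on the outer probability chains the two bounds and gives~\eqref{eq:gamma_confidence} immediately.

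Concretely, I would first invoke Corollary~\ref{cor:verification_risk} to record that, for any $\epsilon \in [0,1]$, the outer probability is at least $1-(1-\epsilon)^N$. The remaining work is to verify $1-(1-\epsilon)^N \geq \gamma$ under the sample condition. This inequality is equivalent to $(1-\epsilon)^N \leq 1-\gamma$; taking logarithms and dividing by $\log(1-\epsilon)$, which is nonpositive and therefore reverses the inequality, rearranges this to exactly $N \geq \log(1-\gamma)/\log(1-\epsilon)$, i.e. the hypothesis. Thus the sample bound and the desired confidence bound are equivalent, and the theorem follows by substituting this into the corollary's estimate.

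The only real care---rather than genuine difficulty---lies in the boundary behavior of the logarithms at $\epsilon \in \{0,1\}$ and $\gamma \in \{0,1\}$, where $\log(1-\epsilon)$ or $\log(1-\gamma)$ degenerates and the division step is not literally valid. I expect this to be the main obstacle to a fully rigorous write-up: I would dispatch $\gamma = 0$ (any $N$ works), $\epsilon = 1$ (the conclusion is immediate since $1-(1-\epsilon)^N = 1 \geq \gamma$), and note that at $\epsilon = 0$ the conclusion requires the inner probability to equal $1$, which is exactly what Corollary~\ref{cor:verification_risk} returns since $(1-\epsilon)^N = 1$ there. Away from these endpoints $\log(1-\epsilon)$ is strictly negative, so the sign reversal in the division step is justified and the equivalence above is clean.
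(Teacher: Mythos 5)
Your proposal is correct and follows essentially the same route as the paper: invoke Corollary~\ref{cor:verification_risk} and then show by a one-line logarithm manipulation (using that $(1-\epsilon)^N$ is nonincreasing in $N$) that the sample bound forces $1-(1-\epsilon)^N \geq \gamma$. Your extra attention to the degenerate endpoints $\epsilon \in \{0,1\}$ and $\gamma \in \{0,1\}$ is a minor refinement the paper glosses over, but it does not change the argument.
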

\begin{proof}
Via Corollary~\ref{cor:verification_risk} we have the following inequality:
\begin{equation}
    \prob^N_{\pi_R}\left[\prob_{\pi_R}[r~|~r \geq -\zeta^*_N] \geq 1-\epsilon \right] \geq 1-(1-\epsilon)^N.
\end{equation}
As $1 - \epsilon \in [0,1]$, if $N \geq \frac{\log(1-\gamma)}{\log(1-\epsilon)}$ then 
\begin{equation}
    \prob^N_{\pi_R}\left[\prob_{\pi_R}[r~|~r \geq -\zeta^*_N] \geq 1-\epsilon \right] \geq 1-(1-\epsilon)^{\frac{\log(1-\gamma)}{\log(1-\epsilon)}}.
\end{equation}
Simplifying the right-hand side of the above inequality provides the desired result.
\end{proof}

Notably, Theorem~\ref{thm:sample_req}'s result is independent of the dimension of the system's state and parameter space, \textit{i.e.} independent of $\ell$ where $\mathcal{X}_0 \times \Theta \subseteq \mathbb{R}^\ell$.  This is why we claim we have made the first step to sample-efficient risk-aware safety-critical system verification.  Independent of system complexity, Theorem~\ref{thm:sample_req} identifies the minimum number of samples required to verify system behavior for arbitrary robustness measures satisfying Definition~\ref{def:robustness}.  Both this dimensionality independence and the general results of Theorem~\ref{thm:sample_req} will be shown through simulated examples in the following section.

\begin{figure}[t]
    \centering
    \includegraphics[width = 0.48\textwidth]{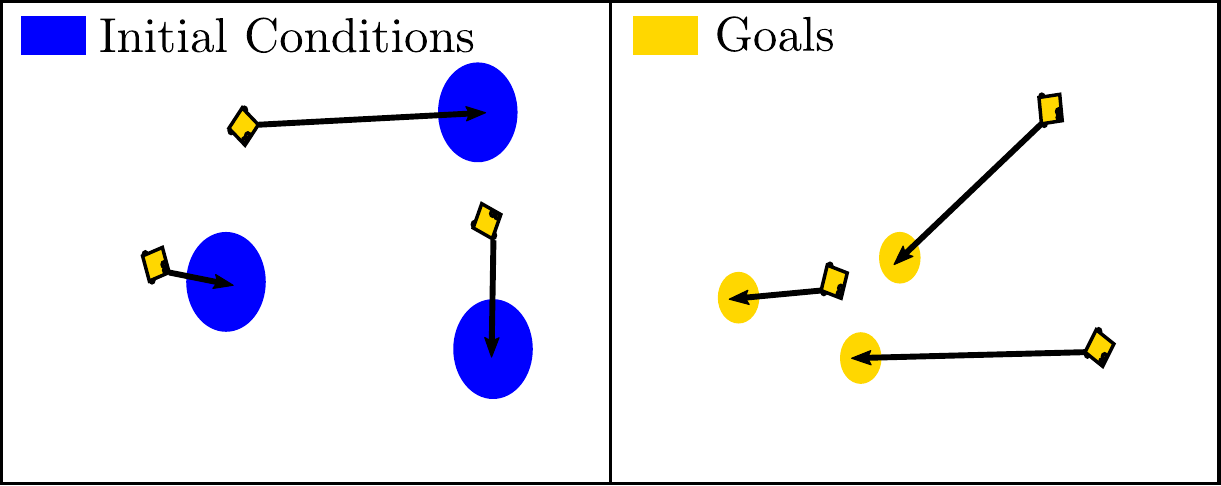}
    \caption{Setting for all robotarium simulations, shown for the three robot system.  Each data collection phase consists of the following steps: (1) Sample initial conditions uniformly over the initial condition space $\mathcal{X}_0$~\eqref{eq:init_space} and send robots there; (2) Sample goals uniformly over the parameter space $\Theta$~\eqref{eq:big_theta} and record state trajectory of robots moving to goal for $30$ seconds; (3) Calculate robustness with robustness measure $\rho$~\eqref{eq:robustness}.} \vspace{-0.2 in}
    \label{fig:my_label}
\end{figure}

\section{Simulated Examples}
\label{sec:examples}
In this section, we will provide a few simulated examples illustrating the results of Theorem~\ref{thm:sample_req} and its lack of dependence on the dimension of the state and parameter space over which a system is to be verified.  For this purpose, we will use the Robotarium~\cite{robotarium} as a case study wherein all robots can be modeled via unicycle dynamics:
\begin{gather}
    x = \begin{bmatrix}
    x,\\
    y, \\
    \theta
    \end{bmatrix},~\dot x = 
    \begin{bmatrix}
    v \cos(\theta), \\
    v \sin(\theta), \\
    \omega,
    \end{bmatrix},~u = [v, \omega]^T, \\
    \mathcal{X} = [-1,1] \times [-0.6,0.6] \times [0,2 \pi],~P = [I_2,~\mathbf{0}_{2x1}]. \label{eq:setting}
\end{gather}
Each robot has a Lyapunov-based controller that drives it to its desired orientation $x_d \in \mathcal{X}$.  When multiple robots are asked to ambulate in the same, confined space, their control inputs are filtered in a barrier-based quadratic program to ensure that the robots never collide~\cite{ames2016control}.  This barrier-based filter provides a natural robustness measure as per Definition~\ref{def:robustness}.  Specifically, if we concatenate the state vector for all $N_R$ robots $\mathbf{x}^T=[x^{1T},x^{2T},\dots,x^{N_RT}]$, we can generate a candidate barrier function that the system is to keep positive over the course of its evolution - here $P$ is defined in~\eqref{eq:setting}:
\begin{equation}
    \label{eq:avoid}
    h_g(\mathbf{x}) = \min_{i \neq j,~i,j \in [1,2,\dots,N_R]}~\|P(x^i - x^j)\| - 0.15.
\end{equation}
Furthermore, we also want the robots to reach their respective goals $g^j \in [-1,1] \times [-0.6, 0.6]$ and can quantify goal satisfaction through another barrier-like function as well.
\begin{equation}
    \label{eq:future}
    h_f(\mathbf{x}) = \max_{i \in [1,2,\dots,N_R]}~0.1 - \|Px^i - g^i\|.
\end{equation}
Then, our candidate robustness measure for a state-signal $\mathbf{x}^{\theta}$ where $\theta^T = [g^{1T},g^{2T},\dots,g^{N_RT}]$ is as follows:
\begin{equation}
    \label{eq:robustness}
    \rho(\mathbf{x}^{\theta}) = \min\left\{\min_{t \in [0,30]}~h_f\left( \mathbf{x}^\theta_t\right), \max_{t \in [0,30]}~h_g\left( \mathbf{x}^\theta_t\right) \right\}.
\end{equation}
If $\rho(\mathbf{x}^\theta) \geq 0$ then, we know that all $N_R$ robots stayed at least $0.15$ meters from each other for $30$ seconds - as $h_g(\mathbf{x}^\theta_t) \geq 0,~\forall~t \in [0,30]$ - and reached within $0.1$ meters of their goal within $30$ seconds - as $h_f(\mathbf{x}^\theta_t) \geq 0,$ for some $t \in [0,30]$.

\newidea{Verifying Theorem~\ref{thm:sample_req}:} As per Problem~\ref{prob:2}, we aim to determine a high probability lower bound $-\zeta^*_N$ to the robustness $r$ of closed-loop trajectories $\mathbf{x}^\theta$ over the initial condition and parameter spaces below for an $N_R = 3$ robot system:
\begin{align}
    \mathcal{X}_0  & = \{\mathbf{x} \in \mathcal{X}^{N_R}~|~h_g(\mathbf{x}) \geq 0.3 \}, \label{eq:init_space} \\
    \Theta & = \{P\mathbf{x} \in \mathcal{X}^{N_R}~|~h_g(\mathbf{x}) \geq 0.3 \}. \label{eq:big_theta}
\end{align}
Furthermore, we want to be at least $(1-10^{-6})\times100\%$ confident in our lower bound $-\zeta^*_N$ such that at least $97.25\%$ of the probability mass of our randomized robustness $R$ is greater than or equal to $-\zeta^*_N$.  Per Theorem~\ref{thm:sample_req} this implies $\gamma = 1-10^{-6}$ and $\epsilon = 0.0275$, and as such, we require $N \geq 496$ samples of this R.V. $R$ to calculate a lower bound $-\zeta^*_N$ that satisfies these criteria.  As a result, we uniformly sampled $N = 500$ initial condition and parameter pairs $(\mathbf{x}_0,\theta) \in \mathcal{X}_0 \times \Theta$, simulated each trajectory for $30$ seconds, recorded the trajectory's robustness $r = \rho(\mathbf{x}^\theta)$, and calculated a lower bound $-\zeta^*_N$ via~\eqref{eq:upper_bound_scenario} yielding $-\zeta^*_N = 0.0103$.

To verify that this lower bound $-\zeta^*_N$ satisfies the criteria that at least $97.25\%$ of the probability mass of $R$ is greater than or equal to $-\zeta^*_N$ we took another $20000$ randomly chosen robustness samples via the prior method and recorded the fraction $l$ of those trajectories whose robustness $r < -\zeta^*_N$.  In doing so $l = 0.0013$.  As such, we estimate that
\begin{equation}
    \prob_{\pi_R}[r~|r \geq -\zeta^*_N] \geq 1-0.0013 \geq 1-\epsilon,
\end{equation}
as desired.

However, we only expect to determine such a lower bound to a certain confidence level as well.  Specifically, we are $\gamma = 1-10^{-6}$ confident that our method will produce a lower bound $-\zeta^*_N$ that satisfies the aforementioned probabilistic inequality.  To show this, we repeated this same lower bound identification process $50$ times.  Indeed, every single time we identified a lower bound that satisfied the desired probabilistic inequality - this harmonizes with the notion that we expect to identify an unsatisfactory lower bound with probability $10^{-6}$ based on our approach.  The minimum such lower bound $-\zeta^*_N = -2.26$ and the maximum probability mass of robustness values $r < -\zeta^*_N$ was $l = 0.0114 < \epsilon = 0.0275$.

\newidea{Dimensionality Scaling:}  The prior vein of thought shows that our method is repeatable and reliable in its ability to determine lower bounds $-\zeta^*_N$ for the random variable $R$ corresponding to the robustness of a three-robot Robotarium system.  However, what if we changed the number of robots and increased the dimension of the initial condition and parameter spaces?  As per Theorem~\ref{thm:sample_req}, if we require similar confidence $\gamma$ in the identification of a lower bound for the same probabilistic cutoff $\epsilon$, the number of required samples $N$ should remain the same.  As such, we repeated the prior data-collection procedure now for an $N_R = 6$ robot system and collected the same $N=500$ samples per trial as prior.  Over all $50$ trials, the maximum calculated lower bound $-\zeta^*_N = 0.03698$, and the maximum probability mass of robustness values $r < -\zeta^*_N$ was $l = 0.01455 < \epsilon = 0.0275$.  As a result, our method not only reliably and repeatably produces probabilistic lower bounds $-\zeta^*_N$ for the random variable $R$, but it is also doesn't scale with increasing dimension of the state and parameter space $\mathcal{X}_0 \times \Theta$ as mentioned in Theorem~\ref{thm:sample_req}.

\section{Conclusion}
We present a scenario approach to risk-aware safety-critical system verification in a two-step fashion.  First, we detail a scenario method to estimate the Value-at-Risk of arbitrary scalar-valued random variables whose distribution is unknown.  Then, we reframe the verification problem as a Value-at-Risk determination problem making use of our prior estimation results.  In doing so, we bound with high probability the volume of initial condition and parameter pairs that could yield unsafe trajectories.  We also provide a minimum sampling requirement for this approach, independent of system complexity, and show that our results hold for simulated systems.  As future work, we hope to factor in system models and existing control techniques to minimize the sample requirement for our procedure.
\bibliographystyle{IEEEtran}
\bibliography{IEEEabrv,bib_works}

\end{document}